\documentclass[11pt]{article}

\RequirePackage[margin=1.1in]{geometry}
\RequirePackage[T1]{fontenc}
\RequirePackage[utf8]{inputenc}
\RequirePackage{lmodern}
\RequirePackage{microtype}

\RequirePackage{amsmath,amssymb,mathtools}
\RequirePackage{amsthm}
\RequirePackage{mathtools}

\RequirePackage{graphicx}
\RequirePackage{float}
\RequirePackage[table,dvipsnames]{xcolor}
\RequirePackage{array}

\definecolor{GreenPale}{HTML}{E4F1E5}
\newcommand{\plainval}[1]{\rule[-3pt]{0pt}{12pt}\ensuremath{#1}}
\newcommand{\oldval}[1]{\begingroup\setlength{\fboxsep}{1.2pt}\setlength{\fboxrule}{.45pt}\fbox{\rule[-2pt]{0pt}{9.5pt}\ensuremath{#1}}\endgroup}
\newcommand{\newval}[1]{\cellcolor{GreenPale}\plainval{#1}}
\newcommand{\keepval}[1]{\cellcolor{GreenPale}\oldval{#1}}
\newcommand{\uvec}{\mathbf u}

\newif\ifshowcomments
\showcommentstrue

\usepackage[normalem]{ulem} %

\RequirePackage{tikz}
\usetikzlibrary{arrows.meta,backgrounds}

\definecolor{SquareBlue}{HTML}{9FBFDD}
\definecolor{SquareOrange}{HTML}{E9B58E}

\newcommand{\DihedralSquare}[4]{%
  \foreach \xx/\yy/\agent/\border in
    {-1/1/1/SquareBlue,1/1/2/SquareOrange,
      1/-1/3/SquareBlue,-1/-1/4/SquareOrange}{%
    \pgfmathsetmacro{\px}{#1*#3*(\xx*cos(-90*#2)-\yy*sin(-90*#2))}%
    \pgfmathsetmacro{\py}{#1*(\xx*sin(-90*#2)+\yy*cos(-90*#2))}%
    \coordinate (v\agent) at (\px,\py);
  }
  \draw[square edge] (v1)--(v2)--(v3)--(v4)--cycle;
  \draw[square edge] (v1)--(v3) (v2)--(v4);
  \ifnum#4=1
    \foreach \xx/\yy/\good in
      {-1/0/a,0/1/b,1/0/c,0/-1/d,
        -.405/.405/{x,p},.405/.405/{y,q}}{%
      \pgfmathsetmacro{\px}{#1*#3*(\xx*cos(-90*#2)-\yy*sin(-90*#2))}%
      \pgfmathsetmacro{\py}{#1*(\xx*sin(-90*#2)+\yy*cos(-90*#2))}%
      \node[good] at (\px,\py) {$\good$};
    }
  \fi
  \foreach \agent/\border in
    {1/SquareBlue,2/SquareOrange,3/SquareBlue,4/SquareOrange}{%
    \node[agent,draw=\border] at (v\agent) {\agent};
  }
}

\RequirePackage{enumitem}

\RequirePackage[square,numbers]{natbib}
\RequirePackage[
  colorlinks=true,
  linkcolor=blue!45!black,
  citecolor=blue!45!black,
  urlcolor=blue!45!black
]{hyperref}
\RequirePackage[nameinlink,capitalize]{cleveref}
\setcitestyle{numbers,square}

\allowdisplaybreaks
\graphicspath{{figures/}}

\theoremstyle{plain}
\newtheorem{theorem}{Theorem}
\newtheorem*{approxtheoremrestated}{\Cref{thm:approx}}

\newtheorem{openproblem}{Open Problem}

\theoremstyle{definition}
\newtheorem{definition}{Definition}

\theoremstyle{plain}

\newcommand{\eps}{\varepsilon}

\makeatletter
\renewcommand{\maketitle}{%
  \begingroup
  \renewcommand{\thefootnote}{\fnsymbol{footnote}}%
  \begin{center}
    {\LARGE\bfseries \@title\par}
    \vspace{1.25em}
    \begin{NoHyper}{\large \@author\par}\end{NoHyper}
  \end{center}
  \@thanks
  \endgroup
  \setcounter{footnote}{0}%
  \vspace{-0.5em}
}
\makeatother

\definecolor{citationPurple}{rgb}{0.4, 0.3, 0.7}
\definecolor{referenceGreen}{rgb}{0.15, 0.55, 0.1}
\hypersetup{
  citecolor=citationPurple,
  linkcolor=referenceGreen
}

\title{Pareto Inefficiency of EFX}
\author{%
\begin{tabular}{@{}c@{\hspace{2em}}c@{\hspace{2em}}c@{}}
Sepideh Mahabadi
& Prasanna Ramakrishnan
& Jakub Tarnawski \\[3pt]
{\normalsize Microsoft Research}
& {\normalsize Harvard University}
& {\normalsize Microsoft Research} \\[2pt]
{\small\texttt{smahabadi@microsoft.com}}
& {\small\texttt{pras\_r@seas.harvard.edu}}
& {\small\texttt{jatarnaw@microsoft.com}}
\end{tabular}%
}

\begin{document}

\maketitle

\begin{abstract}
We consider the problem of allocating indivisible goods to agents with additive valuations, under two classic notions of fairness and efficiency: envy-freeness up to any good (EFX) and Pareto optimality. While some of the earliest work on EFX showed that Pareto optimal EFX allocations may not exist if agents can have zero value for some goods, it has remained open whether the same is true when all values are positive. 

We present a simple instance in which every good has positive value to every agent, but no Pareto optimal EFX allocation exists. We then generalize to a family of instances showing that Pareto optimal $\alpha$-EFX allocations may not exist for any $\alpha > 2/3$.

\end{abstract}

\section{Introduction}

Envy-freeness is a compelling notion of fairness when a group of agents must divide a collection of goods. As motivating examples, consider heirs dividing an estate, a divorcing couple splitting their possessions, or teams drafting new players in a sports league. In any of these settings, if we can manage to make every participant feel that they got as good of an outcome as anyone else, it would be hard to take any complaints of unfairness seriously. 

Miraculously, envy-freeness is achievable in remarkably general settings, provided the goods are divisible \citep{Varian74,Stromquist80,Woodall80}. But if goods are indivisible, envy-freeness is obviously out of reach: if one good is too valuable, whoever gets it will be envied. To sidestep this issue, researchers have introduced several relaxations that allow envy between agents so long as it is small enough to be alleviated by removing one good. The most permissive of these is \emph{envy-freeness up to one good} (EF1), which allows agent $i$ to envy agent $j$ if the envy is alleviated by removing \emph{some} good from $j$'s bundle, intuitively the most valuable one \citep{LiptonMMS04,Budish11}. EF1 allocations exist even for general monotone valuations and can be found with simple, intuitive algorithms \citep{LiptonMMS04}. With additive valuations, they are also compatible with other desirable properties of fairness and economic efficiency \citep{CaragiannisKMPSW16,CaragiannisKMPSW19}.

A much more demanding relaxation is \emph{envy-freeness up to any good} (EFX), which requires any envy that $i$ has for $j$ to be alleviated after removing \emph{any} good from $j$'s bundle, intuitively the \emph{least} valuable one \citep{GourvesMT14,CaragiannisKMPSW16,CaragiannisKMPSW19}. In contrast to EF1, the existence of EFX allocations when agents have additive valuations over the goods remains one of the central open problems in fair division \citep{Procaccia20,AmanatidisABFLMVW23}.

Since EFX is arguably the closest analogue of envy-freeness for indivisible goods, establishing the existence of EFX allocations is an alluring prospect. But one criticism is that EFX can sometimes force outcomes that seem unintuitive. Consider an example of Plaut and Roughgarden~\citep{PlautR18,PlautR20}, in which two agents $1$ and $2$ must divide goods $a$, $b$, and $c$, and their values for the goods are given by \Cref{tab:efx-po-zero}.

\begin{table}[htbp]
\centering
\smallskip
\setlength{\tabcolsep}{11pt}
\renewcommand{\arraystretch}{1.13}
\begin{tabular}{c|rrr}
 & $a$ & $b$ & $c$\\\hline
1 & 2 & 1 & 0\\
2 & 2 & 0 & 1
\end{tabular}
\caption{A simple instance with no Pareto optimal EFX allocation, if zero values are allowed~\citep{PlautR18,PlautR20}.}\label{tab:efx-po-zero}
\end{table}

A reasonable instinct is to break the tie for $a$ arbitrarily, and then give $b$ and $c$ to the agents who value them. However, these outcomes fail EFX. For example, if agent 1 gets $a$ and $b$ and agent 2 gets $c$, then agent 2 envies agent 1 even with $b$ removed. To achieve EFX, we must give one agent $a$ and the other agent \emph{both} $b$ and $c$. But then the latter agent has a good that she does not value at all, and that the other agent would gladly take.  It seems reasonable to hope that a fair allocation would avoid this kind of waste. This is the idea behind \emph{Pareto optimality}, which requires that there is no way to make someone better off without making anyone worse off. In the example above, there is no EFX allocation that is Pareto optimal.

One may still hope that this issue is pathological, stemming from unrealistically allowing agents to have no value for some goods. In general, if we require Pareto optimality but have several tiny goods valued by only one agent, the ``up to any good'' power of EFX is completely neutralized.\footnote{In fact, the original definition of EFX \citep{CaragiannisKMPSW16,CaragiannisKMPSW19} only considers removing goods that the envious agent values positively, and under that definition, giving $a$ and $b$ to agent 1 and $c$ to agent 2 is both EFX and Pareto optimal. The example uses the stronger definition of Plaut and Roughgarden~\citep{PlautR18,PlautR20}, which considers removing any good. For positive valuations, which are the focus of this paper, the two definitions coincide.} When every agent values every good (e.g., replacing each 0 above with an $\eps$), this conflict disappears, and we can open the door to positive results. Plaut and Roughgarden~\citep{PlautR18,PlautR20} showed that under \emph{nonzero marginal utility}, Pareto optimal EFX allocations exist for two agents with additive valuations, and for any number of agents with identical valuations. They also exist for \emph{bivalued} preferences where every value is one of two fixed positive numbers \citep{AmanatidisBFHV21,GargM23}, for \emph{lexicographic} preferences \citep{HosseiniSVX21}, and when there are only two \emph{types} of goods \citep{DavidiukDIS26}.

However, we show that Pareto optimal EFX allocations need not exist even when every agent values every good, answering a question raised by Plaut and Roughgarden~\citep{PlautR18,PlautR20}.\footnote{The same result can also be obtained with instances in which every agent has distinct values for distinct bundles. This rules out an approach to building a complete EFX allocation by maintaining Pareto optimality at every step, as suggested by Chaudhury~\citep[p.~154]{RayChaudhury21Thesis}. See \Cref{sec:ndg} for details.}

\begin{theorem}\label{thm:main}
The following instance with four agents and eight goods has no allocation that is both EFX and Pareto optimal.
\begin{center}\normalfont
\setlength{\tabcolsep}{11pt}\renewcommand{\arraystretch}{1.13}
\begin{tabular}{c|rr|rrrr|rr}
 & $x$&$y$&$a$&$b$&$c$&$d$&$p$&$q$\\\hline
1&7&6&3&3&1&1&2&1\\
2&6&7&1&3&3&1&1&2\\
3&7&6&1&1&3&3&2&1\\
4&6&7&3&1&1&3&1&2
\end{tabular}
\end{center}
\end{theorem}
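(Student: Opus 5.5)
The plan is a proof by contradiction combined with a case analysis. The instance is symmetric under the dihedral group of the square. Rotating the agents $1\to2\to3\to4\to1$ together with $a\to b\to c\to d\to a$, $x\leftrightarrow y$, $p\leftrightarrow q$ preserves the valuations, and so does the corresponding reflection. Every agent has total value $24$. Each agent values exactly two of the medium goods at $3$, namely the two adjacent to it on the square (agent~1 has $a,b$, agent~2 has $b,c$, and so on). Agents $1,3$ prefer $x,p$ and agents $2,4$ prefer $y,q$. I will assume an allocation $A=(A_1,\dots,A_4)$ that is EFX and Pareto optimal, and use the symmetry to reduce every case to a few representatives.

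\textbf{Step 1: the big goods are split.} Suppose one agent holds both $x$ and $y$. Then EFX requires each other agent $i$ to have $v_i(A_i)\ge v_i(A_j\setminus\{y\})\ge 6$, and in fact $\ge 7$ for three of them, up to which good is removed. The other three agents must therefore receive disjoint bundles from $\{a,b,c,d,p,q\}$, each worth at least $6$ to its owner. But even giving each of these goods to an agent who values it most yields total value at most $3\cdot4+2+2=16<18$, a contradiction. So $x$ and $y$ go to two different agents, which I call \emph{rich}; the other two agents are \emph{poor}.

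\textbf{Step 2: rich agents hold essentially a singleton, and Pareto optimality places $x,y$.} If a rich agent holds its big good plus another good $g$, EFX forces each poor agent $i$ to have $v_i(A_i)\ge v_i(\text{big good})\ge 6$. Together with the fact that the rich agents may take more medium goods, this again overloads the budget of at most $16$ units available from $\{a,b,c,d,p,q\}$. I would tighten this counting into a lemma: the rich agents hold at most one small extra good, and only in specific configurations that I enumerate.

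Next, Pareto optimality should force $x$ to agent~1 or~3 and $y$ to agent~2 or~4. If not, swapping $x$ and $y$ between the rich agents, or trading a big good for a medium bundle, gives a Pareto improvement. Up to symmetry this leaves two cases: the rich agents are adjacent (say $1$ holds $x$ and $2$ holds $y$) or opposite. In the opposite case both big goods would be held by agents of the same parity, so it should collapse quickly.

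\textbf{Step 3: the poor agents and the final contradiction.} In each remaining case, the goods $a,b,c,d,p,q$ (minus any extras given to rich agents) are split between the two poor agents, say $3$ and $4$. I would list the splits satisfying EFX between them and toward the rich agents, where each poor agent needs value at least about $6$ because of the rich bundles. For each survivor I would exhibit an explicit Pareto improvement. The typical improvement is a swap in which a poor agent gives a good it values at $1$ to a rich agent who values it at $3$ (or to the other poor agent), in exchange for a good the receiver values less. A second type is a three-way cycle that uses the $7$-versus-$6$ gap on the big goods.

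I expect Step~3 to be the main obstacle. The EFX constraints are tight at values $6$--$7$, so many near-feasible splits survive, and each one needs its own Pareto-improving exchange. To keep this manageable I would first check which medium goods each poor agent must own at value $3$ to reach the threshold. I would then show that the leftover $1$-valued goods always create a beneficial trade with an adjacent agent on the square.
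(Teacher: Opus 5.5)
Step~1 is fine, but the steps that carry the real content are either unsupported or wrong.

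\textbf{Step~2.} The budget count does not close when only one big good has a companion. If the holder of $y$ also takes a good, a poor agent only needs $v_i(y)$, which is $6$ for an odd agent. With poor agents $3,4$, the thresholds $6+7=13$ do not exceed the $13$ units still available to them after $y$'s holder takes $b$. (Even when both rich agents have extras you must use the threshold $\max(v_i(x),v_i(y))=7$; with $6$ you only get $12\le 12$.) What rules this case out is a parity argument:
\begin{itemize}
\item the owner of the singleton must prefer her big good, so if $x$ is the singleton she is odd, say agent~$1$;
\item the poor agents share at most five goods, so one of them holds at most two, worth at most $6$, which forces her to be agent~$3$ holding exactly $c,d$;
\item the other poor agent is then even and gets at most three of $a,b,p,q$, worth at most $6<7$.
\end{itemize}
So both big goods are singletons, not ``at most one small extra good''.

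Your next claim, that Pareto optimality forces $x$ to an odd agent and $y$ to an even one, is unsupported. It even contradicts your own ``opposite'' case, which could then not occur. Swapping $x$ and $y$ helps only when both holders value their big good at $6$; if agent~$1$ holds $x$ and agent~$3$ holds $y$, the swap hurts agent~$1$. This case is genuinely alive: $(x,bcpq,y,ad)$ is EFX with profile $(7,9,6,6)$. Handling it requires bounding the poor agents' utilities ($s+t\le 15$, and $s\le 9$ because EFX gives each poor agent at least two goods) and then exhibiting a trade such as $(abp,yq,cd,x)$. It does not collapse quickly.

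\textbf{Step~3.} Your picture of both the constraints and the improvements is off. Once the big goods are singletons, EFX toward the rich agents is vacuous, so the poor agents need not have value about $6$: $(x,y,bcdp,aq)$ is EFX with agent~$4$ at $5$. The swaps you describe also cannot succeed in the critical cases. In $(x,y,cdp,abq)$ the poor agents already obtain the maximum possible total $14$ from the six small and medium goods, and a rich agent owns nothing but her big good. Hence no exchange that leaves the big goods with their owners is a Pareto improvement. What works is to reassign a big good and compensate its former owner with medium and small goods, exploiting $7<3+3+2$ and $6\le 3+3$. Examples are $(x,y,cdp,abq)\to(x,bcq,yp,ad)$ and $(x,y,bcdp,aq)\to(abp,y,xc,dq)$. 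Exhibiting such trades is the heart of the theorem, and your proposal never supplies them.

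The missing organizing device is to use these utility bounds to reduce to the maximal utility profiles of EFX allocations. Up to symmetry these are $(7,7,8,6)$, $(7,7,9,5)$ and $(7,9,6,6)$, and you then give one dominating allocation for each. Any allocation dominating a maximal profile dominates every EFX allocation whose profile lies below it, so no enumeration of individual splits is needed.
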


With only $4^8 = 65,536$ allocations, \Cref{thm:main} is easy to verify programmatically. Our primary goal is to give an intuitive explanation of how this example exposes tensions between EFX and Pareto optimality. We then give a short proof which leverages the symmetry of the instance to reduce to only three cases. 

One might hope that the incompatibility in \cref{thm:main} is fragile, and that a slight relaxation of EFX is compatible with Pareto optimality. A relaxation which has been a source of interesting positive results is \emph{approximate EFX}. For $\alpha \in (0,1]$, an allocation is \emph{$\alpha$-EFX} if, whenever an agent looks at another agent's bundle with any single good removed, she values her own bundle at least an $\alpha$ fraction as much. $\alpha$-EFX allocations are known to exist for $\alpha = 1/2$ \citep{PlautR18,PlautR20} and $\alpha = \frac{\sqrt{5} - 1}{2} \approx 0.618$ \citep{AmanatidisMN20}, and for $\alpha = 2/3$ in several special cases \citep{AmanatidisFS24}. The proof of \cref{thm:main} in fact shows that the instance has no Pareto optimal allocation that is $\alpha$-EFX for any $\alpha > 6/7$, but we can show that the incompatibility persists all the way down to $2/3$.

\begin{theorem}\label{thm:approx}
For every $\alpha > 2/3$, there is an instance with four agents, nine goods, and positive additive valuations, in which no allocation is both $\alpha$-EFX and Pareto optimal.
\end{theorem}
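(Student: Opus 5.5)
We will construct a parametrized family generalizing the instance of \cref{thm:main}: four agents arranged symmetrically as in the dihedral picture, two ``big'' goods $x,y$, four ``medium'' goods $a,b,c,d$ placed on the edges of the square, and now three ``small'' goods $p,q,r$ whose values depend on a parameter $t$ with $t\to 0$ as $\alpha \to 2/3$ from above. The guiding idea is the same as in \cref{thm:main}. Pareto optimality forces each medium good to go to one of the two agents who value it highly, and forces the big goods to go to agents who value them most. It also forces the small goods to agents with the highest relative value for them, measured against the goods those agents already hold. Meanwhile $\alpha$-EFX forces the agents who receive a big good to receive essentially nothing else. The extra small good is used to push the threshold: the values will be scaled so that the tightest envy comparison is between a bundle worth roughly $2$ units (a medium good plus small change) and another bundle worth roughly $3$ units after removal of its least valuable good. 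This ratio tends to $2/3$.

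Concretely, we first fix the instance with symbolic values (big goods of value about $B$, medium goods $3$ vs.\ $1$ pattern scaled, small goods $2$ vs.\ $1$ scaled by $t$). We then prove a sequence of structural claims about any Pareto optimal allocation. First, no agent receives both big goods. Second, each big good goes to an agent valuing it at the higher value. Third, every medium good goes to one of its two high-value agents. Here we use the standard fact that under additive valuations a PO allocation admits no improving cycle of exchanges, in particular no beneficial pairwise swap. Next we add the $\alpha$-EFX constraints. An agent without a big good must not $\alpha$-envy the holder of a big good with a small item removed, so the big-good holders get at most their big good. With $x$ and $y$ going to two agents, the remaining two agents must split the four medium goods and the small goods. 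Using the dihedral symmetry (rotations and reflections, which swap $x\leftrightarrow y$ and $p\leftrightarrow q$), we then reduce to a constant number of cases for which pair of agents holds the big goods (adjacent or opposite on the square).

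In each case we exhibit either a Pareto improvement, via a swap of a medium good against small goods whose exchange rates differ between two agents, or an $\alpha$-EFX violation. The violation will be a ratio bounded by $(2+O(t))/(3-O(t))$, which is below $\alpha$ once $t$ is small relative to $\alpha-2/3$. The main obstacle is calibrating the values so that all the PO-forcing inequalities hold simultaneously while the critical envy ratio approaches $2/3$ rather than $6/7$. In particular, we need the big goods valuable enough that their holders are excluded from everything else under $\alpha$-EFX, yet not so valuable that they create envy that can never be resolved, which would make the instance trivially infeasible for every allocation rather than specifically for PO ones. We would first try setting the big value to be exactly the threshold at which $\alpha$-EFX toward a big-good holder is tight, so that adding any small good breaks it. We would then tune the small goods' exchange ratios so that the ninth good is what forces the bad split of medium goods in the remaining cases. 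Finally, we would verify that every case ends in a violation.
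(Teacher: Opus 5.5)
\textbf{There is a genuine gap.} Your proposal never commits to an instance. For an existence statement, the instance together with a check that every $\alpha$-EFX allocation is Pareto dominated is the entire proof, and everything after ``Concretely'' is a plan (``we would first try \dots'').

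The structural lemmas you want to extract from Pareto optimality are also invalid. Absence of improving swaps or exchange cycles is only a necessary condition, and Pareto optimality does not force medium (or big) goods to their top valuers; in the example below, agent~4 holds $b$, worth $1$ to her, in a Pareto optimal allocation. The paper never argues this way. It first uses $\alpha$-EFX alone to show that both large goods are singletons and that each unlucky agent holds at least two goods. This includes the delicate subcase where exactly one large good is bundled, which your one-line isolation argument skips. Only then does it refute Pareto optimality, by exhibiting explicit dominating allocations in the adjacent and opposite cases.

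\textbf{The missing idea is what makes those dominating allocations exist, and a third small good cannot supply it.} In the paper, the $2/3$ is the ratio between an unlucky agent's two favorite medium goods and a large good, $2K$ versus $3K$. If instead a single medium good played the role of the $2$, two favorite medium goods would outweigh a big good and $\alpha$-EFX could no longer isolate the big goods. Calibrating your family like the paper makes a big good worth more to its high valuer than all four medium goods combined, so no trade can compensate a lucky agent.

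For instance, take:
\begin{itemize}
\item medium values $3$ and $1$ as you suggest;
\item big goods worth $9$ to their high valuers and $6$ to the others;
\item $p$ worth $2t$ to odd and $t$ to even agents, $q$ the reverse, and $r$ worth $t$ to everyone.
\end{itemize}
For small $t$, the allocation $(x,y,cdp,abqr)$ is EFX and Pareto optimal. A dominating allocation must give agents~1 and~2 a big good each, since all other goods together are worth $8+4t<9$ to them. That forces agent~3 to keep $cd$ and agent~4 to keep $ab$, so agents~1 and~2 must hold $x$ and $y$ respectively. The small goods already maximize $v_3+v_4$, so nobody can strictly gain.

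The paper's fix is a fifth \emph{medium} good $e$ worth $K$ to every agent, with large goods worth $3K$ or $2K+2$, the other medium goods worth $K$ or $1$, and $p,q$ worth $2$ or $1$. Every agent then has three favorite medium goods, so a large good can be traded for three of them without loss. This is exactly what happens in the dominating allocations $(x,bce,yp,adq)$, $(abe,y,cdp,xq)$ and $(x,yb,cep,adq)$. The threshold comes from $2K+3<\alpha\cdot 3K$, alongside $K+5<\alpha(2K+2)$ and $K+4<\alpha\cdot 2K$, giving $\alpha>\frac23+\frac1K$ for $K\ge 6$. The good $e$ is also what closes the isolation step in the subcase where only one big good is bundled.
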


In contrast, we do not yet know whether $\alpha$-EFX and Pareto optimal allocations always exist for \emph{any} constant $\alpha > 0$ (with positive valuations\footnote{With zero values, the answer is no. In \Cref{tab:efx-po-zero},
if each agent's value for $a$ is changed from $2$ to a large number $K$, then any Pareto optimal allocation is at most $\frac{1}{K}$-EFX.}), and we leave this as a challenging open problem (\Cref{op:eps-efx}).

\subsection{Related work}

\paragraph{Existence of EFX.} EFX allocations exist for two agents with general monotone valuations and for any number of agents with identical valuations \citep{PlautR18,PlautR20}, and for three agents with additive valuations \citep{ChaudhuryGM20,ChaudhuryGM24}. See the survey~\citep{AmanatidisABFLMVW23} for many other special cases. For four or more agents with additive valuations, the question remains open, although EFX allocations are known to exist for four agents with at most nine goods \citep{AlkassarFM26}. Very recently, EFX allocations were shown to not always exist for submodular valuations \citep{AkramiMMSW26,MackenzieS26EFX}, and for additive valuations in the \emph{chores} setting in which items have \emph{costs} (negative utilities) \citep{HeT26}. 
For subadditive valuations, Mackenzie and Suzuki~\citep{MackenzieS26EFX} also rule out $\alpha$-EFX for $\alpha > 2^{-1/6} \approx 0.89$.

\paragraph{Fairness and efficiency.}
As briefly mentioned earlier, Pareto optimal EFX allocations have been shown to exist in many special cases. Under nonzero marginal utility, Plaut and Roughgarden~\citep{PlautR18,PlautR20} show that the \emph{leximin} solution\footnote{Leximin maximizes the smallest utility, then the second-smallest, and so on. For the two-agent result, valuations are first normalized so that both agents have the same total value for goods.} gives an EFX and Pareto optimal allocation for two agents with additive valuations, and for any number of agents with identical valuations. Such allocations also exist for lexicographic preferences, where an agent prefers any good to all her less-preferred goods combined \citep{HosseiniSVX21}, and for positive additive valuations with only two types of goods, where each agent values all goods of a given type equally \citep{DavidiukDIS26}. \citep{AmanatidisBFHV21} showed that the allocation maximizing \emph{Nash welfare}\footnote{Nash welfare is the geometric mean of the agents' utilities.} is EFX and Pareto optimal in bivalued instances where every value is one of two fixed positive numbers. Thus, any counterexample for goods with positive additive valuations must use at least three distinct values. The instances in \Cref{thm:main,thm:approx} both use five.

On the negative side, deciding whether a Pareto optimal EFX allocation exists for additive valuations is NP-hard when zero values are allowed, even when there are at most three more goods than agents \citep{LimNT26}. For chores, He and Tao~\citep{HeT26} recently showed that EFX and Pareto optimality can be incompatible even when all costs are positive and take only two values. With three cost values, they show that EFX allocations alone can fail to exist. 

Pareto optimality has also been studied alongside closely related fairness objectives. For additive goods, maximizing Nash welfare always gives a Pareto optimal EF1 allocation \citep{CaragiannisKMPSW16,CaragiannisKMPSW19}, and allocations satisfying both properties can also be found in pseudopolynomial time~\citep{BarmanKV18}. However, when valuations are submodular rather than additive, Pareto optimal EF1 allocations may not exist even for two agents \citep{MackenzieS26b,ChandramouleeswaranN26}. For the related notion of equitability up to any good (EQX)\footnote{EQX requires $v_i(A_i) \ge v_j(A_j\setminus\{g\})$ for all agents $i,j$ and goods $g\in A_j$ with $v_j(g)>0$, comparing only agents' own values for their bundles.}, the leximin solution always gives a Pareto optimal EQX allocation for positive additive valuations \citep{FreemanSVX19}.

\section{Preliminaries}\label{sec:prelim}

A fair division instance consists of a set $N = \{1, \dots, n\}$ of agents, a set $M$ of indivisible goods, and valuation functions $v_i: 2^M\to \mathbb{R}_{\geq 0}$ for each agent $i \in N$. In the most general case, these valuation functions are arbitrary monotone functions, but we only need to consider \emph{additive} valuation functions which satisfy $v_i(S) = \sum_{g \in S} v_i(g)$ for every $S \subseteq M$. In this case, the valuation functions can be represented succinctly as a table giving the value that each agent has for each good. The instances we consider only use positive integers for values.

An \emph{allocation} $A = (A_1, \dots, A_n)$ is a partition of $M$ into $n$ sets (called \emph{bundles}), where agent $i$ receives $A_i$.  For brevity, we write bundles as strings and allocations as tuples, so that $(x, y, cdp, abq)$ is the allocation in which agents 1, 2, 3, and 4 receive $\{x\}$, $\{y\}$, $\{c, d, p\}$, and $\{a, b, q\}$, respectively.

\begin{definition}[Approximate EFX]
For $\alpha \in (0, 1]$, an allocation $A$ is \emph{$\alpha$-EFX} if for all agents $i \ne j$ and all goods $g \in A_j$,
\[
v_i(A_i) \ \ge\ \alpha \cdot v_i(A_j \setminus \{g\}).
\]
An allocation is \emph{EFX} if it is $1$-EFX.
\end{definition}

Since our valuations are additive, the tightest constraint comes from the good in $A_j$ that $i$ values the least. The constraint holds automatically if $A_j$ consists of a single good. We say that a good is a \emph{singleton} if it is the only good in its bundle.

\begin{definition}[Pareto optimality]
An allocation $B$ \emph{Pareto dominates} (or simply \emph{dominates}) an allocation $A$ if $v_i(B_i)
 \geq v_i(A_i)$ for all agents $i$, and at least one inequality is strict. An allocation is \emph{Pareto optimal} (also called \emph{Pareto efficient}) if no other allocation Pareto dominates it.
\end{definition}

In reasoning about Pareto optimality, it is convenient to identify an allocation $A$ with its \emph{utility profile} $\uvec(A) = (v_1(A_1), \dots, v_n(A_n))$. If an allocation dominates $A$, then it also dominates every allocation with the same utility profile, so this view is helpful for reducing casework.

\section{EFX and Pareto Optimality are Incompatible}

For convenience, we repeat the instance from \Cref{thm:main} below.

\begin{center}\normalfont
\setlength{\tabcolsep}{11pt}\renewcommand{\arraystretch}{1.13}
\begin{tabular}{c|rr|rrrr|rr}
 & $x$&$y$&$a$&$b$&$c$&$d$&$p$&$q$\\\hline
1&7&6&3&3&1&1&2&1\\
2&6&7&1&3&3&1&1&2\\
3&7&6&1&1&3&3&2&1\\
4&6&7&3&1&1&3&1&2
\end{tabular}
\end{center}

Before diving into the proof of the theorem, we will walk through some of the structure in the instance, and the tensions that force EFX allocations to be inefficient.\footnote{Readers may enjoy attempting to find a Pareto optimal EFX allocation in this example to get a sense for why they fail. Reasoning through it feels similar to solving a logic puzzle.}

\paragraph{Structure and symmetry.} The goods have three types: $x$ and $y$ are \emph{large}, $a, b, c, d$ are \emph{medium}, and $p$ and $q$ are \emph{small}. One may notice that agents' values for goods are very symmetric. The even (or odd) agents have the same preferences for large and small goods, and opposite preferences for medium goods. Any even agent and any odd agent have opposite preferences for the large and small goods, and have the same preference for two medium goods (one valued at 3, and one valued at 1), and opposite preferences for the other medium goods.

This symmetry is best characterized by a square (formally, the dihedral group of order $8$).
We can identify the agents with the vertices of a square in cyclic order, and associate the medium goods with the sides, and the pairs $\{x, p\}$ and $\{y, q\}$ with the two diagonals, as in \cref{fig:square-instance}. Think of each good as having a base value ($6$ for large goods and $1$ for the others), and if an agent is incident to a good's edge, then the agent's value for the good has an added bonus ($2$ for medium goods, and $1$ for the others).

\begin{figure}[htbp]
\centering
\begin{tikzpicture}[x=1cm,y=1cm,
  every node/.style={text=black,font=\normalfont\normalsize},
  square edge/.style={draw=black,line width=.5pt},
  agent/.style={circle,fill=white,line width=1.2pt,
    minimum size=6.5mm,inner sep=0pt,outer sep=0pt},
  good/.style={fill=white,inner sep=3pt},
  rotation/.style={draw=black,line width=.5pt,-{Stealth[length=1.6mm]}},
  reflection/.style={draw=black,line width=.5pt,dash pattern=on 2pt off 2pt},
  element/.style={font=\normalfont\small,inner sep=1pt}]

  \begin{scope}[shift={(0,-1.35)}]
    \DihedralSquare{1.68}{0}{1}{1}
  \end{scope}

  \begin{scope}[shift={(4.5,0)}]
    \foreach \k in {0,1,2,3}{%
      \coordinate (R\k) at (2.45*\k,0);
      \coordinate (S\k) at (2.45*\k,-2.7);
    }
    \foreach \k/\next in {0/1,1/2,2/3}{%
      \draw[rotation] ([xshift=.87cm]R\k)--([xshift=-.87cm]R\next);
      \draw[rotation] ([xshift=-.87cm]S\next)--([xshift=.87cm]S\k);
    }
    \draw[rotation,rounded corners=5pt]
      ([xshift=.87cm]R3)--(8.5,0)--(8.5,1.13)
      --(-1.15,1.13)--(-1.15,0)--([xshift=-.87cm]R0);
    \draw[rotation,rounded corners=5pt]
      ([xshift=-.87cm]S0)--(-1.15,-2.7)--(-1.15,-4.05)
      --(8.5,-4.05)--(8.5,-2.7)--([xshift=.87cm]S3);

    \foreach \k in {0,1,2,3}{%
      \draw[reflection] ([yshift=-1.15cm]R\k)--([yshift=.86cm]S\k);
    }

    \foreach \k/\toplabel/\bottomlabel in
      {0/{\mathrm{id}}/s,1/r/{sr},2/{r^2}/{sr^2},3/{r^3}/{sr^3}}{%
      \begin{scope}[shift={(R\k)},
        square edge/.style={draw=black,line width=.35pt},
        agent/.style={circle,fill=white,line width=.85pt,
          minimum size=3.5mm,inner sep=0pt,outer sep=0pt,
          font=\normalfont\scriptsize},
        good/.style={fill=white,inner sep=.5pt,
          font=\normalfont\fontsize{7}{8}\selectfont}]
        \DihedralSquare{.60}{\k}{1}{0}
      \end{scope}
      \begin{scope}[shift={(S\k)},
        square edge/.style={draw=black,line width=.35pt},
        agent/.style={circle,fill=white,line width=.85pt,
          minimum size=3.5mm,inner sep=0pt,outer sep=0pt,
          font=\normalfont\scriptsize},
        good/.style={fill=white,inner sep=.5pt,
          font=\normalfont\fontsize{7}{8}\selectfont}]
        \DihedralSquare{.60}{\k}{-1}{0}
      \end{scope}
      \node[element] at ([yshift=-.98cm]R\k) {$\toplabel$};
      \node[element] at ([yshift=-.98cm]S\k) {$\bottomlabel$};
    }
  \end{scope}

\end{tikzpicture}
\caption{The square representation and its dihedral symmetries.
Left: each good labels the edge joining the two agents who value it most.
Right: a Cayley graph of the dihedral group of order~8. Solid arrows
apply a $90^\circ$ clockwise rotation $r$.
Dashed lines apply reflection $s$ across a vertical axis.}
\label{fig:square-instance}
\end{figure}

With the square in mind, we call agents \emph{adjacent} if they are incident to a common side, and \emph{opposite} if they are incident to a common diagonal. 
Rotating and reflecting
the square allows us to relabel agents and goods into canonical configurations without loss of generality, and we will do so frequently. (See \Cref{fig:square-instance} for a reminder of which transformations are possible.)

The symmetry is convenient for reducing casework, but the square is also a helpful explanation for the conflicts in the instance when viewed as a graph. The edges show how pairs of agents compete for the same goods, and intuitively, for an allocation to avoid inefficiency, it ought to give each good to an agent incident to it. 

\paragraph{Inefficiencies caused by EFX.} There are a few primary effects which lead to EFX allocations being Pareto inefficient. The first is that EFX allocations must allocate the large goods as singletons, much like in \Cref{tab:efx-po-zero}.
The intuition is that if a large good shares a bundle with anything else, then every other agent must value her bundle as much as the large good alone. Since the large goods are worth 6 or 7, while the six other goods are worth only at most 11 to each agent, this constraint turns out to be too restrictive. 

The singleton constraint inherently causes a loss in the total value (compared to the maximum possible by allocating each good to an incident agent).
If such a loss were avoidable, the two agents that get large goods (called \emph{lucky}) should be adjacent, which makes the other two agents (called \emph{unlucky}) adjacent as well. But if adjacent agents split the medium goods, one of these goods goes to an agent that has value 1 for it, which is a loss of 2 units of value. Similarly, if the lucky
agents are opposite, then we lose 1 unit in allocating the large goods, and another unit in allocating the small goods.

\paragraph{Exploiting inefficiencies with trades.}

Of course, lost value alone does not make an allocation Pareto dominated. The key is that some lost value can always be recovered through carefully choreographed trades (that leave no one worse off). To see how these trades might work, notice that a large good is worth about as much as two medium goods: an agent who holds a large good worth $7$ to her would gladly trade it for her favorite medium goods and small good, worth $3 + 3 + 2 = 8$, and  an agent who holds her two favorite medium goods would be willing to give them up for a large good worth $6$. Similarly, a favorite medium good is worth as much as two small goods (or one small and one less preferred medium good).

\Cref{fig:efx-pareto-improvement-1} gives an illustrative example of these trades in action. A natural EFX allocation is $A = (x, y, cdp, abq)$, giving the large goods to agents 1 and 2, and each other good to agent 3 or 4 depending on who values it more (breaking the ties for $b$ and $d$ in opposite directions to be fair).
To reduce the unrealized value, we want to allocate the medium goods to opposite agents. We can do this by trading agent 3's medium goods for $y$, and trading agent 4's $b$ and $q$ for $d$. This allows agent $2$ to get $bcq$ as compensation for $y$, which is a net gain. The resulting allocation is $B=(x,bcq,yp,ad)$, which Pareto dominates $A$, but is not EFX since the large good $y$ is not allocated alone (in particular, agent 4 prefers $y$ to her bundle $ad$).

For the sake of brevity, we will just provide the Pareto dominating allocations instead of explaining the trades in detail in the actual proofs, but one can find the exact same kinds of moves taking place in \Cref{fig:efx-pareto-improvements}.

\begin{figure}[htbp]
\centering
\setlength{\tabcolsep}{6pt}
\renewcommand{\arraystretch}{1.14}
\begin{tabular}{c|cc|cccc|cc!{\vrule width 0.8pt}cc}
&$x$&$y$&$a$&$b$&$c$&$d$&$p$&$q$
&$\uvec(A)$&$\uvec(B)$\\\hline
1&\keepval{7}&\plainval{6}&\plainval{3}&\plainval{3}&\plainval{1}&\plainval{1}&\plainval{2}&\plainval{1}&7&7\\
2&\plainval{6}&\oldval{7}&\plainval{1}&\newval{3}&\newval{3}&\plainval{1}&\plainval{1}&\newval{2}&7&\textbf{8}\\
3&\plainval{7}&\newval{6}&\plainval{1}&\plainval{1}&\oldval{3}&\oldval{3}&\keepval{2}&\plainval{1}&8&8\\
4&\plainval{6}&\plainval{7}&\keepval{3}&\oldval{1}&\plainval{1}&\newval{3}&\plainval{1}&\oldval{2}&6&6\\
\end{tabular}
\caption{The EFX allocation $A=(x,y,cdp,abq)$ (boxed) is
Pareto dominated by $B=(x,bcq,yp,ad)$ (green): agent~2 gains
one unit of utility, while every other agent's utility is unchanged.}
\label{fig:efx-pareto-improvement-1}
\end{figure}

We now prove \Cref{thm:main} formally.

\begin{proof}[Proof of \Cref{thm:main}.]

We split the proof into three steps. First, we show that in every EFX allocation, $x$ and $y$ are allocated as singletons. Then, we argue that up to symmetry, there are only three \emph{maximal} utility profiles for EFX allocations (that are not dominated by another EFX allocation). Finally, we give an allocation that dominates each of these profiles.  

\paragraph{Step 1: EFX isolates large goods.} Fix some EFX allocation. If one agent has both $x$ and $y$, then every other agent must get value at least $7$ by EFX. However, the total value for the remaining goods is at most $3\cdot 4 + 2\cdot 2 = 16 < 7\cdot 3$, so $x$ and $y$ must have different owners. Call these agents \emph{lucky} and the other two \emph{unlucky}. 

If both lucky agents have a second good, then the unlucky agents must have value at least $7$ by EFX. But they have at most four goods between them, worth at most $12 < 7 \cdot 2$ in total. Thus, at least one of the large goods is a singleton. 

Suppose that $x$ is a singleton but $y$ is not. By EFX, every agent values her bundle at least as much as $y$. In particular the agent who gets $x$ must prefer $x$ to $y$, so she is odd, say agent $1$ after reflection.
The unlucky agents split at most five goods, so one has at most two. This agent can only value her bundle as much as $y$ if she is odd (agent $3$) and has her favorite medium goods ($cd$). But then the other unlucky agent is even, gets three of $abpq$, and has value at least $7$ (her value for $y$), which is impossible.

Thus, both $x$ and $y$ are allocated as singletons. One more helpful observation is that the unlucky agents must each get at least two of the remaining goods. If an unlucky agent gets at most one, her value is at most $3$, but her value for the other unlucky agent's bundle after removing a small good is at least $5$.

\paragraph{Step 2: Three maximal utility profiles remain.} Next, we analyze the possible maximal utility profiles, up to symmetry.

If both large goods are worth 6 to their owners, we can swap them and get a Pareto improvement. Thus, after rotation we can assume the lucky agents are 1 and 2 or 1 and 3. These cases correspond to utility profiles of the form $(7,7,s,t)$ and $(7, s, 6, t)$ respectively. In either case we can swap the unlucky agents with a reflection, so we only need to consider $s \geq t$.

\emph{Case 1: $(7,7,s,t)$}. The small and medium goods have total value at most $3 + 1 + 3 + 3 + 2 + 2 = 14$ to agents 3 and 4, so $s + t\leq 14$. We cannot get $s = t = 7$, since it would require each of the six goods to go to the agent that prefers it (breaking ties for $b$ and $d$), and this can only result in $(s, t) = (8, 6)$ or $(9, 5)$. Since agent 3 gets at most four goods, $s \leq 9$, so every profile in this case is dominated component-wise by either $(7,7,8,6)$ or $(7,7,9,5)$.

\emph{Case 2: $(7,s,6,t)$}. The small and medium goods have total value at most $3\cdot 4 + 1 + 2 = 15$ to agents 2 and 4, so $s + t\leq 15$. Since $s\leq 9$ for the same reason as before, $(s,t) \leq (9, 6)$ or $(8, 7)$. This leaves maximal profiles $(7, 9, 6, 6)$ and $(7, 8, 6, 7)$.

Altogether, we have four maximal profiles: 
$$(7,7,8,6), \qquad (7,7,9,5), \qquad (7, 9, 6, 6), \qquad  (7, 8, 6, 7).$$
The first and last are actually the same after a rotation, so we only need to dominate the first three profiles.

\paragraph{Step 3: Dominate the three profiles.} It remains to show that the profiles  $(7,7,8,6), (7,7,9,5),$ and  $(7, 9, 6, 6)$ have Pareto improvements. The first is shown in \Cref{fig:efx-pareto-improvement-1}, and the others are shown in \Cref{fig:efx-pareto-improvements}. (We include an example EFX allocation in each for readers who are interested in tracing the trades.)

\begin{figure}[htbp]
\centering
\small
\setlength{\tabcolsep}{4pt}
\renewcommand{\arraystretch}{1.2}

\begin{minipage}[t]{0.49\linewidth}
\centering
\begin{tabular}{c|cc|cccc|cc!{\vrule width 0.8pt}cc}
&$x$&$y$&$a$&$b$&$c$&$d$&$p$&$q$
&$\uvec(A)$&$\uvec(B)$\\\hline
1&\oldval{7}&\plainval{6}&\newval{3}&\newval{3}&\plainval{1}&\plainval{1}&\newval{2}&\plainval{1}&7&\textbf{8}\\
2&\plainval{6}&\keepval{7}&\plainval{1}&\plainval{3}&\plainval{3}&\plainval{1}&\plainval{1}&\plainval{2}&7&7\\
3&\newval{7}&\plainval{6}&\plainval{1}&\oldval{1}&\keepval{3}&\oldval{3}&\oldval{2}&\plainval{1}&9&\textbf{10}\\
4&\plainval{6}&\plainval{7}&\oldval{3}&\plainval{1}&\plainval{1}&\newval{3}&\plainval{1}&\keepval{2}&5&5\\
\end{tabular}
\end{minipage}\hfill%
\begin{minipage}[t]{0.49\linewidth}
\centering
\begin{tabular}{c|cc|cccc|cc!{\vrule width 0.8pt}cc}
&$x$&$y$&$a$&$b$&$c$&$d$&$p$&$q$
&$\uvec(A)$&$\uvec(B)$\\\hline
1&\oldval{7}&\plainval{6}&\newval{3}&\newval{3}&\plainval{1}&\plainval{1}&\newval{2}&\plainval{1}&7&\textbf{8}\\
2&\plainval{6}&\newval{7}&\plainval{1}&\oldval{3}&\oldval{3}&\plainval{1}&\oldval{1}&\keepval{2}&9&9\\
3&\plainval{7}&\oldval{6}&\plainval{1}&\plainval{1}&\newval{3}&\newval{3}&\plainval{2}&\plainval{1}&6&6\\
4&\newval{6}&\plainval{7}&\oldval{3}&\plainval{1}&\plainval{1}&\oldval{3}&\plainval{1}&\plainval{2}&6&6\\
\end{tabular}
\end{minipage}

\caption{Two EFX allocations $A$ (boxed) and Pareto improvements
$B$ (green). Left: $A=(x,y,bcdp,aq)$ and $B=(abp,y,xc,dq)$;
agents~1 and~3 each gain one unit of utility.
Right: $A=(x,bcpq,y,ad)$ and $B=(abp,yq,cd,x)$;
agent~1 gains one unit. All other utilities are unchanged.}
\label{fig:efx-pareto-improvements}
\end{figure}
\end{proof}

\section{Approximate EFX}

The instance in \Cref{thm:main} is the simplest example we are aware of with positive values and no Pareto optimal EFX allocation, but the idea behind it can be strengthened to rule out Pareto optimal \emph{approximate} EFX allocations as well. It is not hard to see that \Cref{thm:main} already shows that $\alpha$-EFX and Pareto optimality are incompatible for $\alpha > 6/7$, but with some modifications we can extend the impossibility down to $2/3$.

\begin{approxtheoremrestated}
For every integer $K \geq 6$, the following instance has no Pareto optimal $\alpha$-EFX allocation for $\alpha > \frac23 + \frac1K$.
\begin{center}\normalfont
\setlength{\tabcolsep}{7pt}\renewcommand{\arraystretch}{1.12}
\begin{tabular}{c|cc|ccccc|cc}
 &$x$&$y$&$a$&$b$&$c$&$d$&$e$&$p$&$q$\\\hline
1&$3K$&$2K+2$&$K$&$K$&1&1&$K$&2&1\\
2&$2K+2$&$3K$&1&$K$&$K$&1&$K$&1&2\\
3&$3K$&$2K+2$&1&1&$K$&$K$&$K$&2&1\\
4&$2K+2$&$3K$&$K$&1&1&$K$&$K$&1&2
\end{tabular}
\end{center}
\end{approxtheoremrestated}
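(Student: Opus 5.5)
The plan is to follow the same three-step structure as the proof of \Cref{thm:main}, reusing the dihedral symmetry of the square. The new good $e$ is worth $K$ to everyone, so it is invariant under the symmetries and every relabelling used before remains available. Throughout, the key numerical fact will be that $\alpha > \frac23 + \frac1K$ gives $\alpha\cdot 3K > 2K+3$, while $\alpha(2K+2) > \frac43 K$. The non-large goods are worth at most $3K+3$ to any single agent: her two favourite medium goods, $e$, her favourite small good and the remaining medium and small goods worth $1$ each. This plays the role of the bound ``$11 < $ two large goods'' in the original instance.

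\textbf{Step 1: $\alpha$-EFX isolates the large goods.} First suppose one agent holds both $x$ and $y$. Then each of the other three agents must get at least $\alpha(2K+2) > \frac43 K$ from goods worth in total at most about $3K+5$ to her. Summing these requirements, while using that each medium good is worth $K$ to only two agents, should give a contradiction.

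Next suppose both lucky agents hold a second good. Each unlucky agent then needs value at least $\alpha(2K+2)$, and also roughly $\alpha\cdot 3K > 2K+3$ when she values the removed-from bundle's large good at $3K$. The goods left to the unlucky agents are too few to supply this. I will do the count carefully, splitting on whether $e$ goes to a lucky agent.

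If exactly one large good, say $y$, is not a singleton, every agent needs about $\alpha\cdot v_i(y)$. For the two agents valuing $y$ at $3K$, this means strictly more than $2K+3$. As before, the agent holding $x$ must be odd. An even agent among the unlucky ones then needs value at least $2K+4$, i.e.\ two goods worth $K$ to her. I expect a short case analysis to show that the odd unlucky agent is then starved.

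Finally, I will derive the analogue of ``each unlucky agent gets at least two goods''. This uses $\alpha$-EFX against a bundle containing a $K$-valued good plus a small good.

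\textbf{Step 2: maximal profiles.} As in the original proof, both large goods worth $2K+2$ to their owners is Pareto-improvable by swapping, so the lucky agents can be taken to be $\{1,2\}$ (profile $(3K,3K,s,t)$) or $\{1,3\}$ (profile $(3K,s,2K+2,t)$). I will bound $s+t$ by the total value of the remaining seven goods to the unlucky pair, and bound $s$ by the number of high-value goods one agent can hold. I will also use $\alpha$-EFX between the unlucky agents to rule out very lopsided splits. The aim is a short list of maximal profiles, analogous to $(7,7,8,6)$, $(7,7,9,5)$ and $(7,9,6,6)$, now expressed as $c_1K+c_0$. Here the $\alpha$-EFX constraints rather than exact EFX may admit a few extra lopsided profiles, so I must check each.

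\textbf{Step 3: explicit Pareto improvements.} For each maximal profile I will give a dominating allocation built from the same trades as in \Cref{fig:efx-pareto-improvement-1,fig:efx-pareto-improvements}. A large good worth $3K$ is traded for two favourite medium goods plus a small good ($2K+2 \ge$ the $2K+2$ valuation, and $K+K+K$ with $e$ beats $3K$ by a small-good margin). The good $e$ is used as a flexible $K$-unit filler, which is exactly why $2K+2$ rather than $2K$ is the ``low'' large value.

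I expect the main obstacle to be Step 1 together with the extra profiles in Step 2. Only approximate envy constraints are available, and the bound must be tight near $\alpha = \frac23$. I would first try to make every inequality reduce to the single comparison $\alpha\cdot 3K > 2K+3$, and I would check small cases such as $K=6$ directly if the general count is delicate.
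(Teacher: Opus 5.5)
Your outline follows the paper's template: isolate $x$ and $y$, use the square's symmetry to reduce to lucky agents $\{1,2\}$ or $\{1,3\}$, then exhibit Pareto improvements. However, Steps~2 and~3 are only announced, and they are the actual content of the proof. You never bound the profiles or name a single dominating allocation. The heuristic you give for finding them is also off. Two favourite medium goods plus a small good are worth $2K+2<3K$, and $K+K+K$ with $e$ equals $3K$ exactly; it does not beat $3K$ by a small-good margin. So an agent who gives up a large good she values at $3K$ gains nothing, and the strict improvement has to come from sharp bounds on $s$ and $t$ on the unlucky side. Concretely:
\begin{itemize}
\item \emph{Adjacent case} $(3K,3K,s,t)$ with $s\ge t$. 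One needs $s+t\le 4K+5$, $t\le 2K+2$, and separately $s\le 2K+4$. The last bound uses $\alpha$-EFX between the unlucky agents: if $s>2K+4$, agent~3 holds $c,d,e$, so agent~4's goods come from $abpq$. These are worth at most $K+4<\alpha\cdot 2K\le\alpha\, v_4(A_3\setminus\{c\})$. Then $(x,bce,yp,adq)$ dominates.
\item \emph{Opposite case}, after reflecting so that agent~2 holds $e$. You need a split on $s$ that your plan does not anticipate. If $s\le 3K$, then $t\le 2K+4$ (agent~4 holds at most five goods, none of them $e$), and $(abe,y,cdp,xq)$ dominates because $s+t\le 5K+3$. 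If $s\ge 3K+1$, then $t\le 2K+2$ and $s\le 3K+5<4K$, and $(x,yb,cep,adq)$ dominates.
\end{itemize}

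Step~1 also needs repair.
\begin{itemize}
\item \emph{One agent holds both $x$ and $y$.} $\alpha$-EFX must hold for \emph{every} removed good, so each other agent may remove the large good she values at $2K+2$ and compare against $3K$. The requirement is therefore $v_i(A_i)\ge\alpha\cdot 3K>2K+3$, which forces three goods each, and $9>7$. With your weaker bound $\alpha(2K+2)$, summing fails: the three agents can jointly get $5K+4$, which exceeds the roughly $4K+10$ required once $K\ge 7$. A per-agent count of the five $K$-valued goods would work, but it needs $K+5<\alpha(2K+2)$, not merely $\alpha(2K+2)>\frac{4}{3}K$.
\item \emph{Only $y$ is non-singleton.} ``Two goods worth $K$'' is not enough to conclude anything; the missing idea is a bundle-size cap. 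Both unlucky agents need two $K$-valued goods, and the holder of $y$ takes at least one extra good. So the unlucky agents share at most six goods, each holds at most four, and four goods with only two $K$-valued ones are worth at most $2K+3$. Hence an even unlucky agent must take all three of her $K$-valued goods, including $e$. The only agent who can still collect two $K$-valued goods is then the other even agent, who would also need $e$. This also covers lucky agents $1$ and $3$, where both unlucky agents are even and your ``odd agent is starved'' picture does not apply.
\item \emph{Each unlucky agent gets at least two goods.} A bundle with one $K$-good and a small good only forces $\alpha K$. You need that the other unlucky bundle has at least six goods, and so is worth at least $2K+3$ after deleting a small good.
\end{itemize}
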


The proof follows the same template as before: the large goods are too valuable to be allocated with other goods, forcing a loss in value that can be recovered with careful trades without leaving anyone worse off. The most significant difference in the instance is the additional good $e$ which has the same value for every agent. This additional medium good allows agents to trade a large good for their favorite medium goods without losing value (akin to the last proof).

If we follow the proof step by step, we find that $\alpha$-EFX allocations can have five different maximal utility profiles, but these are actually dominated by only two different allocations (up to symmetry). With this in mind, we can streamline the proof by introducing the dominating allocations earlier, at the point where they can cover the cases at hand.

\begin{proof}[Proof of \Cref{thm:approx}]
Fix an $\alpha$-EFX allocation $A$ where $\alpha > \frac23 + \frac1K$. Thus, 
$v_i(A_i) \geq \alpha \cdot v_i(A_j\setminus \{g\})$
for all $i \neq j$ and $g\in A_j$. To show that certain choices violate $\alpha$-EFX, we will use the following three inequalities:
$$ 2K+3 < \alpha\cdot3K,\qquad K+5<\alpha(2K+2),\qquad K+4<\alpha\cdot2K.$$
The last two hold when $K$ is sufficiently large, specifically at least $6$.

\paragraph{Step 1: $\alpha$-EFX isolates large goods.} If one agent has both $x$ and $y$, each other agent has utility greater than $2K+3$ by $\alpha$-EFX. Since every nonlarge good is worth at most $K$, each needs at least three goods, but there are only seven to share. Thus $x,y$ have different owners. Call them \emph{lucky} and the other two agents \emph{unlucky}.

If both lucky agents have a second good, each unlucky agent again needs utility greater than $2K+3$ by $\alpha$-EFX. Each therefore needs at least three goods, but at most five remain. Thus, at least one large good is a singleton.

Suppose, by symmetry, that $x$ is a singleton but $y$ is not. The core issue is that by $\alpha$-EFX, every agent values her bundle at least $\alpha$ times her value of $y$. This is greater than $2K + 3$ for even agents and $K + 5$ for odd agents. Each unlucky agent must get at least two favorite medium goods. Otherwise, her value is at most $K + 5$. Since the lucky agents take the large goods and one more, the unlucky agents share at most six goods. Each gets at least two, so neither can get more than four. With at most two favorite medium goods, such a bundle is worth at most $2K + 3$.

But the owner of $x$ must be odd, since it is a singleton and even agents value it at $2K + 2$. So at least one of the unlucky agents is even. To get value greater than $2K + 3$, this unlucky agent must get all three of her favorite medium goods. But then the only way for the other unlucky agent to get two favorite medium goods is if she too is even, in which case she actually must get all three. Thus, both unlucky agents would have to get $e$. This contradiction implies that $x$ and $y$ must both be singletons.

One more useful observation is that each unlucky agent receives at least two goods. Otherwise, her utility is at most $K$, while the other unlucky bundle, after deleting one of $p, q$, is worth at least $2K + 3$ to her, violating $\alpha$-EFX.

\paragraph{Step 2: Finding dominating allocations.} Suppose our fixed $\alpha$-EFX allocation $A$ is also Pareto optimal. If both large goods are worth $2K + 2$ to their owners, we can swap them and get a Pareto improvement. Thus, after rotation we can assume that the lucky agents are 1 and 2 or 1 and 3.

\emph{Case 1: Adjacent lucky agents $1,2$.} Take $A_1 = x$, $A_2 = y$ and reflect so that the utility profile $(3K, 3K, s, t)$ satisfies $s \geq t$. We claim that $s \leq 2K + 4$, $t \leq 2K + 2$, and $s + t \leq 4K + 5$. This implies that $A$ is dominated by $B=(x,bce,yp,adq)$, with utility profile $\uvec(B)=(3K,3K,2K+4,2K+2)$. Every agent gets at least as much utility, and the total is at least one more.

\begin{center}
\begin{minipage}{\linewidth}\centering
\setlength{\tabcolsep}{4.5pt}\renewcommand{\arraystretch}{1.17}
\begin{tabular}{c|cc|ccccc|cc!{\vrule width 0.8pt}cc}
&$x$&$y$&$a$&$b$&$c$&$d$&$e$&$p$&$q$
&$\uvec(A)$&$\uvec(B)$\\\hline
1&\keepval{3K}&\plainval{2K+2}
 &\plainval{K}&\plainval{K}&\plainval{1}&\plainval{1}&\plainval{K}
 &\plainval{2}&\plainval{1}&$3K$&$3K$\\
2&\plainval{2K+2}&\oldval{3K}
 &\plainval{1}&\newval{K}&\newval{K}&\plainval{1}&\newval{K}
 &\plainval{1}&\plainval{2}&$3K$&$3K$\\
3&\plainval{3K}&\newval{2K+2}
 &\plainval{1}&\plainval{1}&\plainval{K}&\plainval{K}&\plainval{K}
 &\newval{2}&\plainval{1}&$s$&$2K+4$\\
4&\plainval{2K+2}&\plainval{3K}
 &\newval{K}&\plainval{1}&\plainval{1}&\newval{K}&\plainval{K}
 &\plainval{1}&\newval{2}&$t$&$2K+2$\\
\end{tabular}
\end{minipage}
\end{center}

It is easiest to show the three inequalities in reverse order. We know that $s + t \leq 4K + 5$, since this is the maximum total utility of the nonlarge goods for agents 3 and 4. Then $t \leq 2K + 2$ follows from $t \leq s$. 

Now suppose towards a contradiction that $s > 2K + 4$. Then agent 3 must get all three of her favorite medium goods, since a bundle with only two favorite medium goods and at most three others is worth at most $2K + 4$ to her.  Agent 4 can receive only goods from $abpq$, worth at most $K + 4$ to her, while she values $A_3 \setminus \{c\}$ at least $2K$. Since $K + 4 < \alpha \cdot 2K$, this violates $\alpha$-EFX. Thus, $s \leq 2K + 4$ as claimed.

\emph{Case 2: Opposite lucky agents $1,3$.} Take $A_1=x$, $A_3=y$, and reflect so that agent~2 holds $e$. Write $\uvec(A)=(3K,s,2K+2,t)$ (here we do not need $s\ge t$). We will show that $A$ is dominated by one of two allocations, depending on whether $s \leq 3K$ or $s \geq 3K + 1$.

First, note that $s + t \leq 5K + 3$, since this is the maximum total utility of the nonlarge goods for agents 2 and 4. Since both unlucky agents get at least two goods, agent 4 gets at most five. Without $e$, these are worth at most $K + K + 2 + 1 + 1 = 2K + 4$ to her. Thus, if $s \leq 3K$, then $A$ is dominated by $B=(abe,y,cdp,xq)$, with utility profile $\uvec(B)=(3K,3K,2K+2,2K+4)$. Every agent gets at least as much utility, and the total is at least one larger. (Note that this $B$ is the same as the last one, after reflection.)

\begin{center}
\begin{minipage}{\linewidth}\centering
\setlength{\tabcolsep}{4.5pt}\renewcommand{\arraystretch}{1.17}
\begin{tabular}{c|cc|ccccc|cc!{\vrule width 0.8pt}cc}
&$x$&$y$&$a$&$b$&$c$&$d$&$e$&$p$&$q$
&$\uvec(A)$&$\uvec(B)$\\\hline
1&\oldval{3K}&\plainval{2K+2}
 &\newval{K}&\newval{K}&\plainval{1}&\plainval{1}&\newval{K}
 &\plainval{2}&\plainval{1}&$3K$&$3K$\\
2&\plainval{2K+2}&\newval{3K}
 &\plainval{1}&\plainval{K}&\plainval{K}&\plainval{1}&\oldval{K}
 &\plainval{1}&\plainval{2}&$s$&$3K$\\
3&\plainval{3K}&\oldval{2K+2}
 &\plainval{1}&\plainval{1}&\newval{K}&\newval{K}&\plainval{K}
 &\newval{2}&\plainval{1}&$2K+2$&$2K+2$\\
4&\newval{2K+2}&\plainval{3K}
 &\plainval{K}&\plainval{1}&\plainval{1}&\plainval{K}&\plainval{K}
 &\plainval{1}&\newval{2}&$t$&$2K+4$\\
\end{tabular}
\end{minipage}
\end{center}

Next, suppose that $s \geq 3K + 1$. Then $s + t \leq 5K + 3$ gives $t \leq 2K + 2$. We also have $s \leq 3K + 5$, since this is the total value of the nonlarge goods. Since $K \geq 6$, we have $s < 4K$, which means that $A$ is dominated by the allocation $C = (x,yb,cep,adq)$, with utility profile $\uvec(C)=(3K,4K,2K+2,2K+2)$.

\begin{center}
\begin{minipage}{\linewidth}\centering
\setlength{\tabcolsep}{4.5pt}\renewcommand{\arraystretch}{1.17}
\begin{tabular}{c|cc|ccccc|cc!{\vrule width 0.8pt}cc}
&$x$&$y$&$a$&$b$&$c$&$d$&$e$&$p$&$q$
&$\uvec(A)$&$\uvec(C)$\\\hline
1&\keepval{3K}&\plainval{2K+2}
 &\plainval{K}&\plainval{K}&\plainval{1}&\plainval{1}&\plainval{K}
 &\plainval{2}&\plainval{1}&$3K$&$3K$\\
2&\plainval{2K+2}&\newval{3K}
 &\plainval{1}&\newval{K}&\plainval{K}&\plainval{1}&\oldval{K}
 &\plainval{1}&\plainval{2}&$s$&$4K$\\
3&\plainval{3K}&\oldval{2K+2}
 &\plainval{1}&\plainval{1}&\newval{K}&\plainval{K}&\newval{K}
 &\newval{2}&\plainval{1}&$2K+2$&$2K+2$\\
4&\plainval{2K+2}&\plainval{3K}
 &\newval{K}&\plainval{1}&\plainval{1}&\newval{K}&\plainval{K}
 &\plainval{1}&\newval{2}&$t$&$2K+2$\\
\end{tabular}
\end{minipage}
\end{center}

Thus, we have shown that every $\alpha$-EFX allocation is dominated, which concludes the proof. \qedhere

\end{proof}

\section{Discussion}

Our results add to a growing list of recent counterexamples in fair division, from pairwise maximin share fairness for additive valuations \citep{Aziz26,Golz26,BeiLLLT26} to EFX for general monotone and even submodular valuations \citep{AkramiMMSW26,MackenzieS26EFX}. While our examples look very simple, they may have been  outside the reach of a brute-force style approach since the search space is larger than one might expect.\footnote{For \Cref{thm:main}, there are $7^{32} \approx 1.1\times 10^{27}$ instances with four agents, eight goods, and integer values between $1$ and $7$.} 
Perhaps we should not be surprised if a similarly simple construction, even exploiting the same tensions, gives a counterexample to EFX with additive valuations.  

Even if EFX allocations do always exist for additive valuations, their counterintuitive behavior (exemplified by the various instances in this paper) makes it worth considering whether EFX is always the right notion.
Would we rather have a potentially unwieldy algorithm that guarantees EFX, or a simple objective like maximum Nash welfare, which guarantees both EF1 and Pareto optimality \citep{CaragiannisKMPSW16,CaragiannisKMPSW19}? 
It is not obvious whether we should be willing to give up economic efficiency for the stronger fairness guarantee.

With this in mind, we suggest two directions towards positive results balancing fairness and efficiency by relaxing our goals.

The first is to shift focus towards approximate EFX instead of exact EFX. Although approximate EFX is a little less natural at first glance, the algorithms achieving $1/2$-EFX \citep{PlautR18,PlautR20} and $0.618$-EFX \citep{AmanatidisMN20} are quite intuitive. Hoping that the same may still be true when adding Pareto optimality as a desideratum, we pose the following open problem, which we have not yet been able to resolve despite significant effort (human and otherwise). 

\begin{openproblem}\label{op:eps-efx}
Is there an absolute constant $\alpha>0$ such that every instance with positive additive valuations admits an allocation that is both $\alpha$-EFX and Pareto optimal?
\end{openproblem}

A second, more conceptual direction is to mitigate concerns by explaining why the kinds of instances explored in this paper appear to be exceedingly rare. While they may expose a real conflict between fairness and efficiency for indivisible goods, one could make an argument that these examples are still rather pathological. Existing work already gives models where, with sufficiently many goods, even exact envy-freeness and Pareto optimality coexist with high probability \citep{DickersonGKPS14,BaiG22}. But one very valuable good can put envy-freeness out of reach, without making the instance seem particularly pathological. Can we make a case for why EFX and Pareto optimality should almost always be compatible, with a natural average-case model that accounts for valuable goods?

\paragraph{Acknowledgments.} We thank Ariel Procaccia for helpful conversations on this question and feedback on a draft of this paper, and Amrit Daswaney for sharing an extension of \Cref{thm:main} to non-degenerate instances (\Cref{sec:ndg}).

\paragraph{AI Methodology.} The instances in \Cref{thm:main,thm:approx} were found using a custom agentic harness developed by the authors, which uses multiple models; the main idea was developed in cooperation between Claude Opus 5.5 and GPT-6 Astra. We then worked with these two models to find streamlined proofs with minimal casework and interpretable steps. The final exposition is written by the authors, with additional AI assistance in copy editing, checking correctness, finding relevant literature, and producing figures.

\bibliographystyle{alpha}
\bibliography{refs}

\newcommand{\etalchar}[1]{$^{#1}$}
\begin{thebibliography}{ABFR{\etalchar{+}}21}

\bibitem[AAB{\etalchar{+}}23]{AmanatidisABFLMVW23}
Georgios Amanatidis, Haris Aziz, Georgios Birmpas, Aris Filos-Ratsikas, Bo~Li, Herv{\'e} Moulin, Alexandros~A. Voudouris, and Xiaowei Wu.
\newblock Fair division of indivisible goods: Recent progress and open questions.
\newblock {\em Artificial Intelligence}, 322:103965, 2023.

\bibitem[ABFR{\etalchar{+}}21]{AmanatidisBFHV21}
Georgios Amanatidis, Georgios Birmpas, Aris Filos-Ratsikas, Alexandros Hollender, and Alexandros~A. Voudouris.
\newblock Maximum {Nash} welfare and other stories about {EFX}.
\newblock {\em Theoretical Computer Science}, 863:69--85, 2021.

\bibitem[AFM26]{AlkassarFM26}
Eyad Alkassar, Mahmoud Fouz, and Kurt Mehlhorn.
\newblock Complete {EFX} allocations exist for four additive agents and up to nine goods, 2026.
\newblock arXiv:2608.08590.

\bibitem[AFRS24]{AmanatidisFS24}
Georgios Amanatidis, Aris Filos-Ratsikas, and Alkmini Sgouritsa.
\newblock Pushing the frontier on approximate {EFX} allocations.
\newblock In {\em Proceedings of the 25th {ACM} Conference on Economics and Computation ({EC})}, pages 1268--1286. ACM, 2024.

\bibitem[AMM{\etalchar{+}}26]{AkramiMMSW26}
Hannaneh Akrami, Alexander Mayorov, Kurt Mehlhorn, Shreyas Srinivas, and Christoph Weidenbach.
\newblock A counterexample to {EFX} {$n \geq 3$} agents, {$m \geq n+5$} items, submodular valuations via {SAT}-solving, 2026.
\newblock arXiv:2604.18216.

\bibitem[AMN20]{AmanatidisMN20}
Georgios Amanatidis, Evangelos Markakis, and Apostolos Ntokos.
\newblock Multiple birds with one stone: Beating 1/2 for {EFX} and {GMMS} via envy cycle elimination.
\newblock {\em Theoretical Computer Science}, 841:94--109, 2020.

\bibitem[Azi26]{Aziz26}
Haris Aziz.
\newblock Pairwise maximin share allocations need not exist, 2026.
\newblock arXiv:2609.06282.

\bibitem[BG22]{BaiG22}
Yushi Bai and Paul G{\"o}lz.
\newblock Envy-free and {Pareto}-optimal allocations for agents with asymmetric random valuations.
\newblock In {\em Proceedings of the Thirty-First International Joint Conference on Artificial Intelligence ({IJCAI})}, pages 53--59, 2022.

\bibitem[BKV18]{BarmanKV18}
Siddharth Barman, Sanath~Kumar Krishnamurthy, and Rohit Vaish.
\newblock Finding fair and efficient allocations.
\newblock In {\em Proceedings of the 2018 {ACM} Conference on Economics and Computation ({EC})}, pages 557--574. ACM, 2018.

\bibitem[BLL{\etalchar{+}}26]{BeiLLLT26}
Xiaohui Bei, Zehan Lin, Shengxin Liu, Rong Luan, and Biaoshuai Tao.
\newblock Non-existence of {PMMS} allocations and a {$4/3$}-{PMMS} guarantee for additive chores, 2026.
\newblock arXiv:2609.10493.

\bibitem[Bud11]{Budish11}
Eric Budish.
\newblock The combinatorial assignment problem: Approximate competitive equilibrium from equal incomes.
\newblock {\em Journal of Political Economy}, 119(6):1061--1103, 2011.

\bibitem[CGM20]{ChaudhuryGM20}
Bhaskar~Ray Chaudhury, Jugal Garg, and Kurt Mehlhorn.
\newblock {EFX} exists for three agents.
\newblock In {\em Proceedings of the 21st {ACM} Conference on Economics and Computation ({EC})}, pages 1--19. ACM, 2020.

\bibitem[CGM24]{ChaudhuryGM24}
Bhaskar~Ray Chaudhury, Jugal Garg, and Kurt Mehlhorn.
\newblock {EFX} exists for three agents.
\newblock {\em Journal of the ACM}, 71(1):4:1--4:27, 2024.

\bibitem[Cha21]{RayChaudhury21Thesis}
Bhaskar~Ray Chaudhury.
\newblock {\em Finding Fair and Efficient Allocations}.
\newblock PhD thesis, Saarland University, 2021.

\bibitem[CKM{\etalchar{+}}16]{CaragiannisKMPSW16}
Ioannis Caragiannis, David Kurokawa, Herv{\'e} Moulin, Ariel~D. Procaccia, Nisarg Shah, and Junxing Wang.
\newblock The unreasonable fairness of maximum {Nash} welfare.
\newblock In {\em Proceedings of the 2016 {ACM} Conference on Economics and Computation ({EC})}, pages 305--322. ACM, 2016.

\bibitem[CKM{\etalchar{+}}19]{CaragiannisKMPSW19}
Ioannis Caragiannis, David Kurokawa, Herv{\'e} Moulin, Ariel~D. Procaccia, Nisarg Shah, and Junxing Wang.
\newblock The unreasonable fairness of maximum {Nash} welfare.
\newblock {\em ACM Transactions on Economics and Computation}, 7(3):12:1--12:32, 2019.

\bibitem[CN26]{ChandramouleeswaranN26}
Harish Chandramouleeswaran and Prajakta Nimbhorkar.
\newblock Nonexistence of simultaneously {EF1} and {Pareto} optimal allocations for submodular valuations, 2026.
\newblock arXiv:2607.18220.

\bibitem[DDIS26]{DavidiukDIS26}
Vladimir Davidiuk, Yuriy Dementiev, Artur Ignatiev, and Danil Sagunov.
\newblock {EFX} and {PO} allocation exists for two types of goods.
\newblock In {\em Proceedings of the {AAAI} Conference on Artificial Intelligence}, volume~40, pages 16795--16802, 2026.

\bibitem[DGK{\etalchar{+}}14]{DickersonGKPS14}
John~P. Dickerson, Jonathan~R. Goldman, Jeremy Karp, Ariel~D. Procaccia, and Tuomas Sandholm.
\newblock The computational rise and fall of fairness.
\newblock In {\em Proceedings of the Twenty-Eighth {AAAI} Conference on Artificial Intelligence}, pages 1405--1411. AAAI Press, 2014.

\bibitem[FSVX19]{FreemanSVX19}
Rupert Freeman, Sujoy Sikdar, Rohit Vaish, and Lirong Xia.
\newblock Equitable allocations of indivisible goods.
\newblock In {\em Proceedings of the Twenty-Eighth International Joint Conference on Artificial Intelligence ({IJCAI})}, pages 280--286. International Joint Conferences on Artificial Intelligence Organization, 2019.

\bibitem[GM23]{GargM23}
Jugal Garg and Aniket Murhekar.
\newblock Computing fair and efficient allocations with few utility values.
\newblock {\em Theoretical Computer Science}, 962:113932, 2023.

\bibitem[GMT14]{GourvesMT14}
Laurent Gourv{\`e}s, J{\'e}r{\^o}me Monnot, and Lydia Tlilane.
\newblock Near fairness in matroids.
\newblock In {\em Proceedings of the 21st European Conference on Artificial Intelligence ({ECAI})}, volume 263 of {\em Frontiers in Artificial Intelligence and Applications}, pages 393--398. IOS Press, 2014.

\bibitem[G{\"o}l26]{Golz26}
Paul G{\"o}lz.
\newblock {PMMS} allocations need not exist for 3 agents with additive valuations, 2026.
\newblock arXiv:2609.08954.

\bibitem[HSVX21]{HosseiniSVX21}
Hadi Hosseini, Sujoy Sikdar, Rohit Vaish, and Lirong Xia.
\newblock Fair and efficient allocations under lexicographic preferences.
\newblock In {\em Proceedings of the {AAAI} Conference on Artificial Intelligence}, volume~35, pages 5472--5480, 2021.

\bibitem[HT26]{HeT26}
Wentao He and Biaoshuai Tao.
\newblock {EFX} for additive chores: Nonexistence, {Pareto} incompatibility, and bi-valued existence, 2026.
\newblock arXiv:2606.08872.

\bibitem[LMMS04]{LiptonMMS04}
Richard~J. Lipton, Evangelos Markakis, Elchanan Mossel, and Amin Saberi.
\newblock On approximately fair allocations of indivisible goods.
\newblock In {\em Proceedings of the 5th {ACM} Conference on Electronic Commerce ({EC})}, pages 125--131. ACM, 2004.

\bibitem[LNT26]{LimNT26}
Eugene Lim, Tzeh~Yuan Neoh, and Nicholas Teh.
\newblock The cost of {EFX}: Generalized-mean welfare and complexity dichotomies with few surplus items, 2026.
\newblock arXiv:2601.12849.

\bibitem[MS26a]{MackenzieS26EFX}
Simon Mackenzie and Mashbat Suzuki.
\newblock Counterexamples to {EFX} for submodular and subadditive valuations, 2026.
\newblock arXiv:2605.06451.

\bibitem[MS26b]{MackenzieS26b}
Simon Mackenzie and Mashbat Suzuki.
\newblock When one good is not enough: {EF1} and {Pareto} optimality are not compatible for submodular valuations, 2026.
\newblock arXiv:2607.17811.

\bibitem[PR18]{PlautR18}
Benjamin Plaut and Tim Roughgarden.
\newblock Almost envy-freeness with general valuations.
\newblock In Artur Czumaj, editor, {\em Proceedings of the Twenty-Ninth Annual {ACM-SIAM} Symposium on Discrete Algorithms ({SODA})}, pages 2584--2603. SIAM, 2018.

\bibitem[PR20]{PlautR20}
Benjamin Plaut and Tim Roughgarden.
\newblock Almost envy-freeness with general valuations.
\newblock {\em SIAM Journal on Discrete Mathematics}, 34(2):1039--1068, 2020.

\bibitem[Pro20]{Procaccia20}
Ariel~D. Procaccia.
\newblock An answer to fair division's most enigmatic question: Technical perspective.
\newblock {\em Communications of the ACM}, 63(4):118, 2020.

\bibitem[Str80]{Stromquist80}
Walter Stromquist.
\newblock How to cut a cake fairly.
\newblock {\em The American Mathematical Monthly}, 87(8):640--644, 1980.

\bibitem[Var74]{Varian74}
Hal~R. Varian.
\newblock Equity, envy, and efficiency.
\newblock {\em Journal of Economic Theory}, 9(1):63--91, 1974.

\bibitem[Woo80]{Woodall80}
D.~R. Woodall.
\newblock Dividing a cake fairly.
\newblock {\em Journal of Mathematical Analysis and Applications}, 78(1):233--247, 1980.

\end{thebibliography}

\appendix

\section{A non-degenerate counterexample}\label{sec:ndg}

Another special case of instances with nonzero valuations are those in which every agent has distinct values for distinct subsets of goods. Such instances are called \emph{non-degenerate}. Slight perturbations can make any instance non-degenerate without introducing any new EFX allocations, so an algorithm looking for an EFX allocation can assume its input is non-degenerate without loss of generality \cite[Lemma~1]{ChaudhuryGM20,ChaudhuryGM24}. Since the algorithm for three agents~\cite{ChaudhuryGM20} works similarly, Chaudhury \cite{RayChaudhury21Thesis} proposed an approach to finding EFX allocations (for any number of agents) which maintains a partial Pareto optimal EFX allocation and modifies it into a ``better'' partial Pareto optimal EFX allocation according to some potential. As a first step, he suggests understanding the existence of Pareto optimal EFX allocations in non-degenerate instances.

The instances in \Cref{thm:main,thm:approx} are clearly degenerate, but slight variations of them are non-degenerate. We attribute this extension to Amrit Daswaney, who shared one such example with us. The example below was further simplified by GPT-6 Astra.

Consider the following ``base'' valuations, denoted $w_i$ for agent $i$.

\begin{center}\normalfont
\setlength{\tabcolsep}{11pt}\renewcommand{\arraystretch}{1.13}
\begin{tabular}{c|rr|rrrrr|rr}
     & $x$ & $y$ & $a$ & $b$ & $c$ & $d$ & $e$ & $p$ & $q$ \\
    \hline
    1 & 6 & 5 & 2 & 2 & 0 & 0 & 2 & 1 & 0 \\
    2 & 5 & 6 & 0 & 2 & 2 & 0 & 2 & 0 & 1 \\
    3 & 6 & 5 & 0 & 0 & 2 & 2 & 2 & 1 & 0 \\
    4 & 5 & 6 & 2 & 0 & 0 & 2 & 2 & 0 & 1
\end{tabular}
\end{center}

This instance is degenerate (and even has zero values). In the actual instance, we order the goods as 
$$(g_0,g_1,\ldots,g_8)=(e,x,y,a,b,c,d,p,q)$$
and take the additive valuations defined by
\[
  \quad v_i(g_j)=2^9\,w_i(g_j)+2^j,\qquad j=0,\ldots,8.\quad
\]

Every agent has distinct values for distinct bundles by considering the values modulo $2^9$. It is not hard to check that there is no Pareto optimal EFX allocation in this instance, either programmatically or by following nearly the same line of reasoning as in the proof of \Cref{thm:main}. We leave the details to the interested reader.

\end{document}